\documentclass[11pt]{article}

\usepackage[margin=1in]{geometry}
\usepackage[T1]{fontenc}
\usepackage[utf8]{inputenc}
\usepackage{lmodern}
\usepackage{microtype}
\usepackage{amsmath,amssymb,amsthm,mathtools}
\usepackage{bm}
\usepackage{booktabs}
\usepackage{array}
\usepackage{enumitem}
\usepackage{hyperref}
\usepackage{graphicx}
\usepackage{xcolor}
\usepackage{comment} 
\excludecomment{commentg} 

\hypersetup{
    colorlinks=true,
    linkcolor=blue!60!black,
    citecolor=blue!60!black,
    urlcolor=blue!60!black
}

\newtheorem{proposition}{Proposition}
\newtheorem{theorem}{Theorem}
\newtheorem{corollary}{Corollary}
\newtheorem{remark}{Remark}

\newcommand{\R}{\mathbb{R}}
\newcommand{\E}{\mathbb{E}}
\newcommand{\norm}[1]{\left\lVert #1 \right\rVert}
\newcommand{\abs}[1]{\left| #1 \right|}

\newcommand{\1}{\mathbf{1}}
\newcommand{\diag}{\mathrm{diag}}
\newcommand{\corr}{\mathrm{corr}}

\newcommand{\depth}{\mathrm{depth}}
\newcommand{\child}{\mathrm{Ch}}
\newcommand{\parent}{\mathrm{pa}}
\newcommand{\treepath}{\mathrm{Path}}
\newcommand{\clip}{\mathrm{clip}}

\title{\textbf{Topological Risk Parity}\thanks{This document is for informational purposes only and does not constitute investment advice.}\\
}
\author{%
Revant Nayar\\{\small\textit{FMI Technologies LLC}}%
\and 
El Mehdi Ainasse\\{\small\textit{FMI Technologies LLC}}%
\and
Dnyanesh Kulkarni\\{\small\textit{FMI Technologies LLC}}%
}
\date{\today}

\begin{document}
\maketitle

\begin{abstract}
We develop \emph{Topological Risk Parity} (TRP), a tree-based portfolio construction approach intended for long/short, market neutral, factor-aware portfolios. The method is motivated by the dominance of passive/factor flows that naturally create a tree-like structure in markets. We introduce two implementation variants: (i) a rooted minimum-spanning-tree allocator, and (ii) a market/sector-anchored variant referred to here as \emph{Semi-Supervised TRP}, which imposes SPY as the root node and the 11 sector ETFs as the second layer. In both cases, the key object is a sparse rooted topology extracted from a correlation-distance graph, together with a propagation law that maps signed signals into portfolio weights.

Relative to classical Hierarchical Risk Parity (HRP), TRP is non-binary and designed for signed cross-sectional signals and hedged long-short portfolios: it preserves signal direction while using return-dependence geometry to shape exposures. It accounts for the fact that there is imperfect correlation between parent and child nodes, and thus does not propagate weights entirely to the children. We can also impose economically motivated hierarchy that involves industries, sub-industries or factors, etc. This makes it much more robust to macroeconomic shocks and crises, where within-cluster correlations might spike. These features make TRP well suited for market-neutral, equity stat-arb or L/S trend-type strategies, where enforcing neutrality or limiting exposures at the market, sector or factor level is extremely important.  

\end{abstract}

\section{Introduction}

Portfolio construction sits at the intersection of cross-sectional alpha modeling and dependence-aware risk control. Mean--variance methods in the tradition of \cite{markowitz1952} provide a canonical baseline but can be unstable in high dimensions and sensitive to covariance estimation. Network and hierarchy-based methods, including minimum-spanning-tree representations of correlation structure \cite{mantegna1999} and Hierarchical Risk Parity (HRP) \cite{lopezdeprado2016}, aim to regularize this problem by replacing dense covariance optimization with sparse or recursive structure. However, HRP based methods do not account for long-short exposures and propagate weights entirely to the child nodes.

This paper formalizes a class of algorithms that start from the same broad intuition---use the geometry of return dependence to regularize raw signals---but implement that idea in a markedly different way. The approach is based on the financial intuition that the market is organised into hierarchies of clusters and sub-clusters with distinct dynamics, with correlated dynamics introduced by the rise of ETFs and passive investing. We start with a rooted MST allocator, and we also introduce a dummy market root that forces sector ETFs to form the second layer of the hierarchy. We derive the final portfolio by multiplying the original signal of each asset by a \emph{topological factor} determined by its position in the rooted topology, followed by an $L^1$ normalization to the desired leverage.

The resulting construction is close in spirit to HRP, but differs in its core object and recursion: TRP uses a rooted sparse topology and propagates signed signal mass through that topology. The local non-conservation at branching nodes is a deliberate feature: it lets the hierarchy shape exposure while still retaining idiosyncratic positions, rather than forcing a purely conservative split at every node.

Our goal is fourfold:
\begin{enumerate}[label=(\roman*)]
    \item give a complete mathematical description of the framework;
    \item characterize its exact departure from HRP;
    \item establish basic theoretical results and bounds that explain the behavior of the method;
    \item discuss practical implications, benefits, and limitations relative to HRP.
\end{enumerate}

\section{Related background}

The present work is closest to three lines of research.

First, \emph{correlation-network methods} represent financial universes as weighted graphs, often using the Mantegna distance
\[
d_{ij} = \sqrt{\frac{1-\rho_{ij}}{2}},
\]
whose MST extracts a sparse backbone of the dependence structure \cite{mantegna1999}. This produces a graph with $n-1$ edges on $n$ vertices, emphasizing the most essential pairwise relations.

Second, \emph{hierarchical allocation methods} such as HRP construct a dendrogram from hierarchical clustering and allocate capital recursively between clusters, typically using cluster variance as the splitting criterion \cite{lopezdeprado2016}. HRP conserves capital at each split and is parity-inspired in the sense that recursive cluster allocations seek balanced risk budgets.

Third, \emph{signal overlay methods} combine a signal vector with exogenous regularization or sparse structure. TRP belongs to this family: the raw signal remains central, but it is filtered and rescaled through a rooted topological object inferred from returns.

\section{MSTs and Passive/ Factor Flows}

This section gives a simple (stylized) mechanism linking allocatable factor flows to tree-like correlation structure. The conclusion is not that markets are exactly tree-structured, but that when return co-movement is dominated by a small number of nested common components, the correlation-distance geometry becomes strongly ordered, and an MST extracts a natural sparse backbone.

\paragraph{Correlation distance and MST backbone.}
Let $\Sigma \in \R^{N\times N}$ denote the covariance matrix of returns, with correlations
\[
\rho_{ij} := \frac{\Sigma_{ij}}{\sqrt{\Sigma_{ii}\Sigma_{jj}}}.
\]
Define the Mantegna correlation distance
\begin{equation}
D_{ij} := \sqrt{\frac{1-\rho_{ij}}{2}}.
\label{eq:sec3-distance}
\end{equation}
A minimum spanning tree (MST) on the complete weighted graph with weights $D_{ij}$ selects a set of $N-1$ edges minimizing total distance. In regimes where the distances fall into well-separated tiers, the MST tends to connect within the closest tier first (basket), then the next (sector), and finally across sectors.

\paragraph{A three-level nested flow model.}
Assume each asset $i$ belongs to a sector $s(i)$ and a basket $b(i)$ (e.g., an ETF sleeve, thematic basket, or a tight industry group). Model returns as
\begin{equation}
 r_i = \theta_M z_M + \theta_S z_{s(i)} + \theta_B z_{b(i)} + \varepsilon_i,
\label{eq:sec3-flow}
\end{equation}
where $z_M$, $\{z_s\}$, $\{z_b\}$ are mutually independent shocks with unit variance, and idiosyncratic noise satisfies $\E[\varepsilon_i]=0$, $\mathrm{Var}(\varepsilon_i)=\sigma_\varepsilon^2$, and $\mathrm{Cov}(\varepsilon_i,\varepsilon_j)=0$ for $i\neq j$.

\paragraph{Implied covariance tiers.}
For $i\neq j$,
\begin{equation}
\mathrm{Cov}(r_i,r_j)=
\begin{cases}
\theta_M^2 + \theta_S^2 + \theta_B^2, & b(i)=b(j),\\
\theta_M^2 + \theta_S^2, & s(i)=s(j),\; b(i)\neq b(j),\\
\theta_M^2, & s(i)\neq s(j).
\end{cases}
\label{eq:sec3-covtiers}
\end{equation}
Moreover,
\begin{equation}
\mathrm{Var}(r_i)=\theta_M^2+\theta_S^2+\theta_B^2+\sigma_\varepsilon^2.
\label{eq:sec3-vartotal}
\end{equation}
Therefore correlations satisfy the strict ordering
\begin{equation}
\rho_{\text{same basket}} > \rho_{\text{same sector}} > \rho_{\text{different sector}},
\label{eq:sec3-corrordering}
\end{equation}
and the associated distances satisfy
\begin{equation}
D_{\text{same basket}} < D_{\text{same sector}} < D_{\text{different sector}}.
\label{eq:sec3-distordering}
\end{equation}

\paragraph{Ultrametric intuition (idealized).}
In the noiseless limit $\sigma_\varepsilon^2\to 0$ with distinct tiers, pairwise distances take only three values, determined by the deepest shared level (basket/sector/market). This is the qualitative hallmark of an ultrametric or tree-like metric, in which distances are largely determined by the lowest common ancestor.

\paragraph{Passive/ factor dominance knob.}
To make the ordering strength explicit, one may parameterize common-flow strength with $\lambda\in[0,1]$ via
\begin{equation}
 r_i = \sqrt{\lambda}\,(z_M+z_{s(i)}+z_{b(i)}) + \sqrt{1-\lambda}\,\varepsilon_i,
\label{eq:sec3-lambda}
\end{equation}
with $\mathrm{Var}(\varepsilon_i)=1$. As $\lambda\to 0$, correlations collapse and distances become harder to separate, making the MST more sample-noise-driven; as $\lambda\to 1$, the tiers in \eqref{eq:sec3-corrordering}--\eqref{eq:sec3-distordering} separate cleanly, making the MST backbone more stable.

\paragraph{Star versus deep hierarchy.}
If $\theta_M^2 \gg \theta_S^2,\theta_B^2$, then all names are dominated by the market component, correlations become nearly uniform, and the MST tends to degenerate toward a star-like structure (a single market hub). A deep hierarchy is most visible when multiple common-flow levels coexist, e.g.
\[
\theta_M^2,\theta_S^2,\theta_B^2 \gg \sigma_\varepsilon^2, \qquad \text{with } \theta_S,\theta_B \not\ll \theta_M.
\]

\paragraph{Implication for TRP.}
TRP can be viewed as a signal allocator that conditions on such a sparse backbone: it uses an MST-derived (or economically anchored) rooted tree to attenuate and redistribute signed signals in a way that is consistent with dominant co-movement structure, while remaining simpler and more robust than dense covariance optimization.

\section{Setup and notation}

Let $N$ denote the total number of assets. For each asset $i \in \{1,\dots,N\}$, let
\[
s_i \in \R
\]
be the current signed signal, and let
\[
R_i = (r_{i1},\dots,r_{iT}) \in \R^T
\]
be a history of returns or log-returns. Stack these into
\[
s = (s_1,\dots,s_N)^\top \in \R^N, \qquad
R \in \R^{N \times T}.
\]

\subsection{Activity filter}

Fix a lookback window $k \le T$, a minimum recent-magnitude threshold $\varepsilon > 0$, and a signal threshold $\tau > 0$. Define
\[
m_i := \frac{1}{k}\sum_{t=T-k+1}^T \abs{r_{it}},
\]
and the active set
\begin{equation}
\mathcal{A}
:=
\left\{
i \in \{1,\dots,N\} :
m_i > \varepsilon, \; \abs{s_i} > \tau
\right\}.
\label{eq:active-set}
\end{equation}
In our baseline specification we take $\tau = 10^{-3}$ and restrict attention to the active assets $\mathcal{A}$.

Let $n_A := |\mathcal{A}|$, and relabel the active assets as $1,\dots,n_A$ when convenient. Denote the restricted signal and return matrix by
\[
s^{(A)} \in \R^{n_A}, \qquad R^{(A)} \in \R^{n_A \times T}.
\]

\subsection{Correlation distance and the MST}

From the active return history, form the sample correlation matrix
\[
C = \corr(R^{(A)}) \in [-1,1]^{n_A \times n_A}.
\]
The implementation clips $C$ entrywise to $[-1,1]$, replaces NaNs by zero, and symmetrizes. The associated distance matrix is
\begin{equation}
D_{ij}
=
\sqrt{\frac{1 - C_{ij}}{2}}.
\label{eq:mantegna-distance}
\end{equation}
This is the standard correlation distance used in network representations of financial markets.

Construct the complete weighted graph
\[
G = (V,E,D), \qquad V = \{1,\dots,n_A\}, \quad |E| = \binom{V}{2},
\]
with edge weight $D_{ij}$. Let $\mathrm{MST}(G)$ denote a minimum spanning tree of $G$.

\begin{proposition}[Basic distance bounds]
For all $i,j$, the distance \eqref{eq:mantegna-distance} satisfies
\[
0 \le D_{ij} \le 1.
\]
Moreover, $D_{ij}=0$ if $C_{ij}=1$, and $D_{ij}=1$ if $C_{ij}=-1$.
\end{proposition}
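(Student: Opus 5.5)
The plan is to reduce everything to the fact that, after the preprocessing described above, every entry of $C$ lies in $[-1,1]$, and then to use monotonicity of the map $x \mapsto \sqrt{(1-x)/2}$.

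First, confirm that $C_{ij}\in[-1,1]$ for all $i,j$. The implementation clips $C$ entrywise to $[-1,1]$ and replaces NaNs by $0\in[-1,1]$. Symmetrization, taken as $(C+C^\top)/2$ or as copying one triangle, preserves membership in $[-1,1]$ because that interval is convex. So we never need to appeal to Cauchy--Schwarz for sample correlations, although that would also give the bound in exact arithmetic.

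Second, set $g(x) := \sqrt{(1-x)/2}$ on $[-1,1]$. For $x\in[-1,1]$ we have $1-x\in[0,2]$, so $(1-x)/2\in[0,1]$ and the square root is real. Since $x\mapsto(1-x)/2$ is decreasing and $\sqrt{\cdot}$ is increasing on $[0,\infty)$, $g$ is decreasing, with
\[
g(1)=0, \qquad g(-1)=\sqrt{1}=1 .
\]
Hence $0 = g(1)\le D_{ij}=g(C_{ij})\le g(-1)=1$. The endpoint claims follow by direct evaluation: $C_{ij}=1$ gives $D_{ij}=0$, and $C_{ij}=-1$ gives $D_{ij}=1$.

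There is no real obstacle here. The only point needing care is the first step: the bound relies on $C_{ij}\in[-1,1]$, and this must be justified from the clipping, NaN-replacement and symmetrization steps rather than assumed, since floating-point sample correlations can slightly exceed $1$ in magnitude.
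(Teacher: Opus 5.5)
Your proposal is correct and follows essentially the same route as the paper, which simply notes that $C_{ij}\in[-1,1]$ forces $(1-C_{ij})/2\in[0,1]$ and then reads off the endpoint values. Your explicit justification of $C_{ij}\in[-1,1]$ from the clipping, NaN-replacement and symmetrization steps, and your monotonicity framing via $g(x)=\sqrt{(1-x)/2}$, are harmless elaborations of that one-line argument.
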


\begin{proof}
Since $C_{ij}\in[-1,1]$, we have $(1-C_{ij})/2 \in [0,1]$, hence its square root also lies in $[0,1]$. The endpoint statements follow immediately.
\end{proof}

\begin{remark}
The MST compresses a dense $n_A(n_A-1)/2$-edge graph into a sparse backbone with exactly $n_A-1$ edges. This sparsification is one reason TRP is attractive as a regularization device: the final allocator depends only on a tree or spanning tree rather than on all pairwise interactions.
\end{remark}

\section{Topological Risk Parity (TRP)}

We now formalize two variants.

\subsection{Variant I: rooted MST allocation}

We first describe the generic rooted MST variant.

\paragraph{Step 1: active universe and MST.}
Given the active set $\mathcal{A}$, compute the MST on the distance graph of active assets.

\paragraph{Step 2: choose a root.}
Let $T = (V_T,E_T)$ be the undirected MST. Choose a root $r \in V_T$ by one of the following rules:
\begin{itemize}
    \item \textbf{hub mode}: a node of maximal degree in the MST;
    \item \textbf{max-magnitude mode}: a node with maximal $\abs{s_i}$;
    \item \textbf{fixed-index mode}: a designated index, often the first active node.
\end{itemize}
Once $r$ is chosen, orient the tree away from $r$. Because an undirected tree has a unique simple path between any two nodes, the parent/child relation induced by a fixed root is unique; any traversal can be used to recover it.

For any non-root node $v$, let $\parent(v)$ denote its (unique) parent node, and let $\child(u)$ denote the set of children of node $u$. Define the branching number
\[
b(u) := |\child(u)|.
\]

\paragraph{Step 3: mixed split-replication coefficient.}
Fix a parameter $\rho \in [0,1]$. For each internal node $u$ with $b(u)\ge 1$, define
\begin{equation}
\alpha_u(\rho)
:=
(1-\rho) + \frac{\rho}{b(u)}.
\label{eq:alpha}
\end{equation}
This is the key propagation coefficient.

\paragraph{Step 4: topological factor.}
Initialize the root with factor
\[
g_r = 1.
\]
For any non-root node $v$, define recursively
\begin{equation}
g_v
=
\alpha_{\parent(v)}(\rho)\, g_{\parent(v)}.
\label{eq:recursive-g}
\end{equation}
Equivalently, if $\treepath(r,v)$ denotes the ordered set of ancestors of $v$ on the unique path from $r$ to $v$ (including $r$ and excluding $v$ itself), then
\begin{equation}
g_v
=
\prod_{u \in \treepath(r,v)} \alpha_u(\rho).
\label{eq:path-product}
\end{equation}

\paragraph{Step 5: pre-normalization portfolio.}
The implementation uses the \emph{original} signal, not a volatility-rescaled signal:
\[
x_v = s_v^{(A)} g_v.
\]

\paragraph{Step 6: leverage normalization.}
Let $L>0$ be the target gross leverage. The final weight on active assets is
\begin{equation}
w_v
=
L \frac{x_v}{\norm{x}_1},
\qquad v \in \mathcal{A},
\label{eq:normalized-weights}
\end{equation}
provided $\norm{x}_1>0$. Inactive assets receive zero weight.

\paragraph{Remark (independence from subtree-mass exponent).}
One may compute subtree quantities
\[
S_u = \sum_{i \in \mathrm{subtree}(u)} |s_i|^p
\]
for an exponent $p \ge 1$, but in Variant I these quantities are \emph{not used} in the propagation. As a result, for fixed active set, root, and return history, the output is independent of $p$.

\subsection{Variant II: market/sector-anchored Semi-Supervised TRP}

We now describe the second variant, which we call \emph{TRP-SPY/XL}. The aim is to impose an economically meaningful prior:
\[
\text{market root} \;\rightarrow\; \text{sector ETFs} \;\rightarrow\; \text{other assets}.
\]

One can also impose an industry or sub-industry structure using ETFs or GICS labeling. 

\[
\text{market root} \;\rightarrow\; \text{sector ETFs} \;\rightarrow\;
\text{industry ETFs} \;\rightarrow\;
\text{other assets}.
\]

One can also introduce a factor-based hierarchy underneath sectors (eg. growth vs value stocks within a sector), thereby controlling factor exposures as well. In this framework, one can introduce neutrality at the sector, industry or factor level by imposing that $\sum_{i}w_{i}=0$ at that layer. 

\paragraph{Step 1: dummy market root.}
Introduce a dummy node $m$ labeled \texttt{SPY}. The dummy node is assigned signal $0$ and zero returns in the fallback branch. It does not receive portfolio weight in the final output; it exists only to organize the hierarchy.

\paragraph{Step 2: identify sector ETFs.}
Let
\[
\mathcal{X}
=
\left\{
i \in \mathcal{A} :
\text{ticker}_i \text{ begins with \texttt{XL}}
\right\}
\]
be the active set of sector ETFs. In practice this captures names such as \texttt{XLK}, \texttt{XLF}, and \texttt{XLE}.

\paragraph{Case A: sector ETFs present.}
When $\mathcal{X}\neq \varnothing$, the implementation:
\begin{enumerate}[label=(\alph*)]
    \item builds an MST on the real active assets only;
    \item augments the resulting tree by adding the dummy root $m$ and connecting $m$ to every sector ETF in $\mathcal{X}$;
    \item runs a DFS from $m$ to produce a rooted spanning tree.
\end{enumerate}

A subtle but important point is that step (b) turns a tree on $n_A$ real assets into a \emph{connected graph that generally contains cycles} whenever $|\mathcal{X}|>1$. The subsequent DFS does not merely orient a tree; it extracts a rooted \emph{spanning tree} from this connected graph.

\paragraph{Case B: no sector ETFs present.}
When $\mathcal{X}=\varnothing$, the implementation falls back to an augmented-correlation construction: the dummy root is inserted as another node and the MST is computed on the augmented universe. This is a pragmatic engineering fallback, but it is less theoretically clean because the dummy node has zero returns and therefore artificial correlations.

\paragraph{Propagation and normalization.}
Once the rooted spanning tree is formed, the same propagation law \eqref{eq:alpha}--\eqref{eq:normalized-weights} is applied to the real assets. Unlike Variant I, the Semi-Supervised TRP variant then applies post-processing:
\begin{equation}
w_i \leftarrow \clip(w_i; -c,c), \qquad
w_i \leftarrow 0 \; \text{if} \; |w_i|<\eta,
\label{eq:postprocess}
\end{equation}
for a cap $c>0$ and a minimum absolute weight threshold $\eta>0$.

\begin{remark}
The sector-anchored variant is sometimes described informally as an equal-split allocator. In the present formulation, it uses the same $\rho$-based propagation rule \eqref{eq:alpha} as Variant I; the equal-split conservative limit corresponds to setting $\rho=1$.
\end{remark}

\section{Theoretical properties}

We now collect a set of exact identities and basic theoretical results.

\subsection{Path formula and immediate consequences}

\begin{proposition}[Exact path representation]
For either TRP variant, once the rooted tree or spanning tree is fixed, each node $v$ has topological factor
\[
g_v
=
\prod_{u \in \treepath(r,v)} \left( (1-\rho) + \frac{\rho}{b(u)} \right).
\]
\end{proposition}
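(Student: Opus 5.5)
The plan is to prove the formula by induction on depth in the fixed rooted tree, unrolling the recursion \eqref{eq:recursive-g} and substituting the definition \eqref{eq:alpha} of $\alpha_u(\rho)$. The first step is to observe that both variants end with the same object. In Variant I the MST is oriented away from $r$. In Variant II the DFS from the dummy root $m$ produces a rooted spanning tree, and we take $r=m$. In either case every non-root node $v$ has a unique parent $\parent(v)$. Following parents upward gives a finite chain $v, \parent(v), \parent(\parent(v)),\dots$ that ends at $r$: depth drops by one at each step and the tree is finite. This chain, with $v$ itself removed, is exactly $\treepath(r,v)$, by uniqueness of simple paths in a tree. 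So $\depth(v)=|\treepath(r,v)|$ is well defined.

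Next comes induction on $\depth(v)$. For the root, $\treepath(r,r)=\varnothing$, the empty product is $1$, and this matches $g_r=1$. Now suppose the formula holds for every node of depth $d$, and let $\depth(v)=d+1$ with $u=\parent(v)$. Then $\treepath(r,v)=\treepath(r,u)\cup\{u\}$, and this union is disjoint. Applying \eqref{eq:recursive-g} and then the inductive hypothesis gives
\[
g_v=\alpha_u(\rho)\,g_u=\alpha_u(\rho)\prod_{w\in\treepath(r,u)}\alpha_w(\rho)=\prod_{w\in\treepath(r,v)}\alpha_w(\rho),
\]
which is \eqref{eq:path-product}.

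Finally, substitute $\alpha_w(\rho)=(1-\rho)+\rho/b(w)$ from \eqref{eq:alpha}. This needs $b(w)\ge1$ for every $w$ on the path. That holds because each ancestor of $v$ has the next node of the path as one of its children.

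The main thing to check is not the algebra, which is routine, but well-definedness in Variant II. There the graph produced by step (b) can contain cycles, so one must make sure the recursion runs on the DFS spanning tree and not on the augmented graph. Once the DFS fixes the parent map, the branching numbers $b(u)$ are counted in that spanning tree, and the induction above applies verbatim. This is also why the statement is phrased as holding ``once the rooted tree or spanning tree is fixed.''
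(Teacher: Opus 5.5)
Your proposal is correct and follows the same approach as the paper, whose proof is the one-line remark that the formula follows by repeatedly substituting the recursion \eqref{eq:recursive-g} along the unique parent chain from $r$ to $v$. Your induction on depth is the formal version of that substitution. Your added care for Variant II, namely taking $r=m$ and counting the branching numbers $b(u)$ in the DFS spanning tree rather than in the augmented graph, is consistent with the statement's ``once the rooted tree or spanning tree is fixed'' clause.
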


\begin{proof}
This follows by repeated substitution of the recursion \eqref{eq:recursive-g} along the unique parent chain from the root to $v$.
\end{proof}

\begin{proposition}[Limit cases]
For fixed rooted tree:
\begin{enumerate}[label=(\alph*)]
    \item if $\rho=0$, then $\alpha_u(0)=1$ for every internal node $u$, hence $g_v=1$ for all $v$, and the allocator reduces to normalized raw-signal allocation:
    \[
    w_v = L \frac{s_v^{(A)}}{\norm{s^{(A)}}_1};
    \]
    \item if $\rho=1$, then $\alpha_u(1)=1/b(u)$ and the allocator becomes a conservative equal-split rooted-tree method.
\end{enumerate}
\end{proposition}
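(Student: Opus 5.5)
The plan is to substitute the two values of $\rho$ into the definition \eqref{eq:alpha} and then use the exact path representation from the preceding Proposition, which expresses $g_v$ as a product of the $\alpha_u(\rho)$ over the ancestors of $v$. The root needs separate treatment: its factor is $g_r=1$ by initialization, which is also the empty product, so the path formula covers every node.

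For part (a), setting $\rho=0$ gives
\[
\alpha_u(0)=(1-0)+\frac{0}{b(u)}=1
\]
for every internal node $u$. This uses only that $b(u)\ge 1$, so that $\alpha_u$ is defined; the value does not depend on $b(u)$. Every factor in the path product is therefore $1$, and so $g_v=1$ for all $v$. Step 5 then gives $x_v=s^{(A)}_v$, hence $\norm{x}_1=\norm{s^{(A)}}_1$. Substituting into \eqref{eq:normalized-weights} yields the displayed formula. The formula is valid under the standing proviso $\norm{x}_1>0$, which holds automatically whenever $\mathcal{A}\neq\varnothing$, because every active asset satisfies $|s_i|>\tau>0$.

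For part (b), setting $\rho=1$ gives $\alpha_u(1)=1/b(u)$, and hence
\[
g_v=\prod_{u\in\treepath(r,v)}\frac{1}{b(u)}.
\]
The phrase ``conservative equal-split'' is not formally defined in the statement, so I would make it precise as a conservation identity. For each internal node $u$,
\[
\sum_{v\in\child(u)} g_v = b(u)\cdot\frac{g_u}{b(u)} = g_u,
\]
that is, the parent's factor is divided equally among its children and the children's factors sum exactly to the parent's. By induction on depth, the factors of the leaves in the subtree of $u$ then sum to $g_u$, and in particular the leaf factors sum to $g_r=1$. This is the sense in which the $\rho=1$ method is conservative, in contrast with the non-conservation for $\rho<1$ mentioned in the introduction.

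The main obstacle is interpretive rather than technical. Part (b) is essentially a definition check, so I would state the conservation identity explicitly as the formal content of ``conservative equal-split.'' I would also note that the final portfolio still goes through the $L^1$ normalization \eqref{eq:normalized-weights}, and for Variant II through the post-processing \eqref{eq:postprocess}. The conservation property therefore holds for the factors $g$, not necessarily for the final weights $w$.
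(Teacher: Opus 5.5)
Your proposal is correct and takes the same approach as the paper, which substitutes $\rho=0$ and $\rho=1$ into \eqref{eq:alpha} and calls both parts immediate. Your conservation identity $\sum_{v\in\child(u)} g_v = g_u$ is exactly the case $\beta_u(1)=1$ of the paper's later Local mass amplification proposition, so your reading of ``conservative'' matches how the paper uses the term afterwards.
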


\begin{proof}
Part (a) is immediate from \eqref{eq:alpha}; so is part (b).
\end{proof}

\begin{remark}
The parameter $\rho$ interpolates between \emph{no topological attenuation} ($\rho=0$) and \emph{equal-split attenuation} ($\rho=1$). This interpolation is central: topology enters continuously rather than discretely.
\end{remark}

\subsection{Exact quantification of non-conservation}

\begin{proposition}[Local mass amplification]
Let $u$ be an internal node with branching number $b(u)$. Then the sum of the immediate child factors satisfies
\begin{equation}
\sum_{v \in \child(u)} g_v
=
\beta_u(\rho)\, g_u,
\qquad
\beta_u(\rho)
:=
b(u)(1-\rho) + \rho.
\label{eq:beta}
\end{equation}
Moreover,
\[
\beta_u(\rho) \ge 1,
\]
with equality if and only if either $b(u)=1$ or $\rho=1$.
\end{proposition}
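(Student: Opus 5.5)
The identity follows directly from the recursion \eqref{eq:recursive-g}. Every child $v\in\child(u)$ has $\parent(v)=u$, so each one carries the same factor $g_v=\alpha_u(\rho)\,g_u$. Summing over the $b(u)$ children gives
\[
\sum_{v\in\child(u)} g_v = b(u)\,\alpha_u(\rho)\,g_u = b(u)\Bigl((1-\rho)+\frac{\rho}{b(u)}\Bigr) g_u = \bigl(b(u)(1-\rho)+\rho\bigr) g_u ,
\]
which is exactly \eqref{eq:beta} with $\beta_u(\rho)=b(u)(1-\rho)+\rho$. The only thing to check is that the cancellation of $b(u)$ in the $\rho/b(u)$ term is legitimate. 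It is, because $u$ is internal, so $b(u)\ge 1$.

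For the inequality, rewrite
\[
\beta_u(\rho)-1=(b(u)-1)(1-\rho).
\]
Here $b(u)\ge 1$ because $u$ is internal, and $\rho\in[0,1]$ by assumption on the parameter. So both factors are nonnegative and their product is nonnegative, which gives $\beta_u(\rho)\ge 1$.

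For the equality case, the product $(b(u)-1)(1-\rho)$ vanishes if and only if one of its factors vanishes, that is, if and only if $b(u)=1$ or $\rho=1$. This is the characterization claimed.

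There is no genuine obstacle in this argument. The only points needing care are that every child of $u$ inherits the same factor from its parent, so the sum is simply $b(u)$ copies of $\alpha_u(\rho)g_u$, and that the hypotheses $b(u)\ge1$ and $\rho\in[0,1]$ are what fix the signs of the two factors. One could also note that nothing in the argument uses the sign of $g_u$, although $g_u>0$ in any case by the path product in Proposition (Exact path representation), since every $\alpha_u(\rho)>0$ for $\rho\in[0,1]$.
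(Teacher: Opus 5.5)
Your proof is correct and follows the paper's argument: you sum the $b(u)$ identical child factors $\alpha_u(\rho)g_u$ and then factor $\beta_u(\rho)-1=(b(u)-1)(1-\rho)\ge 0$. Your explicit treatment of the equality case, that a product of two nonnegative reals vanishes iff one factor does, spells out a step the paper leaves implicit.
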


\begin{proof}
By construction, every child $v$ of $u$ satisfies
\[
g_v = \alpha_u(\rho)\, g_u
= \left((1-\rho)+\frac{\rho}{b(u)}\right)g_u.
\]
Summing over the $b(u)$ children gives
\[
\sum_{v\in\child(u)} g_v
=
b(u)\left((1-\rho)+\frac{\rho}{b(u)}\right)g_u
=
\left(b(u)(1-\rho)+\rho\right)g_u.
\]
Since
\[
\beta_u(\rho)-1 = (b(u)-1)(1-\rho)\ge 0,
\]
the claim follows.
\end{proof}

\begin{corollary}[Local expansion relative to conservative splitting]
The exact local deviation from conservative splitting is
\[
\beta_u(\rho)-1 = (b(u)-1)(1-\rho).
\]
Equivalently, whenever $\rho<1$ and $b(u)>1$, branching expands total descendant factor mass relative to the conservative benchmark $\beta_u(\rho)=1$.
\end{corollary}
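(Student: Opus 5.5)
The corollary is a restatement of the previous Proposition (local mass amplification), so the plan is to read it off from \eqref{eq:beta} and the factorization already given in that proof.

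\emph{Step 1: the identity.} By the Proposition, $\sum_{v\in\child(u)} g_v = \beta_u(\rho)\,g_u$ with $\beta_u(\rho)=b(u)(1-\rho)+\rho$. The conservative benchmark is the split in which children's factors sum exactly to the parent's factor, i.e.\ $\beta_u=1$. This is the $\rho=1$ case of the Limit cases proposition. The local deviation is therefore $\beta_u(\rho)-1$. Expanding gives $b(u)-b(u)\rho+\rho-1$, and regrouping yields $(b(u)-1)(1-\rho)$, which is the first displayed claim.

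\emph{Step 2: the strict inequality.} If $\rho<1$ and $b(u)>1$, then both factors $(b(u)-1)$ and $(1-\rho)$ are strictly positive. Hence $\beta_u(\rho)>1$, and so $\sum_{v\in\child(u)} g_v > g_u$ whenever $g_u>0$.

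For the word ``expands'' I should confirm that $g_u>0$. By the path representation (the Exact path representation proposition), $g_u$ is a product of factors $\alpha_w(\rho)=(1-\rho)+\rho/b(w)$. Each such factor is strictly positive for $\rho\in[0,1]$ and $b(w)\ge 1$, since it is a convex combination of $1$ and $1/b(w)>0$. So multiplying the positive gap $(\beta_u-1)g_u$ indeed gives strict expansion of descendant factor mass.

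No step is a real obstacle. The only point needing care is this positivity of $g_u$, which upgrades ``$\beta_u>1$'' to an actual mass increase, and the path formula supplies it.
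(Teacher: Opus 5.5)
Your proposal is correct and follows the paper, which reads the corollary off directly from the identity $\beta_u(\rho)-1=(b(u)-1)(1-\rho)$ computed in the proof of the local mass amplification proposition. Your extra check that $g_u>0$ (each $\alpha_w(\rho)$ is a convex combination of $1$ and $1/b(w)$) is a harmless refinement that the paper leaves implicit here; the paper asserts $g_v>0$ without proof in the sign-preservation argument.
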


\begin{corollary}[Level-mass bound]
Let
\[
\Gamma_B(\rho) := B(1-\rho)+\rho,
\]
where $B$ is the maximal branching number. If $L_\ell$ denotes the set of nodes at depth $\ell$, then
\[
\sum_{v \in L_\ell} g_v \le \Gamma_B(\rho)^\ell.
\]
\end{corollary}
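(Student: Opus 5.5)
The plan is to argue by induction on the depth $\ell$, applying the Local mass amplification proposition (equation \eqref{eq:beta}) one level at a time. Write $M_\ell := \sum_{v\in L_\ell} g_v$.

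The base case is $\ell=0$. Here $L_0=\{r\}$ and $g_r=1$, so $M_0 = 1 = \Gamma_B(\rho)^0$.

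For the inductive step, I will use that the tree is rooted: every node at depth $\ell+1$ has a unique parent, and that parent lies at depth $\ell$. Hence $L_{\ell+1}$ is the disjoint union of the sets $\child(u)$ over $u\in L_\ell$, where leaves contribute the empty set. Grouping by parent gives
\[
M_{\ell+1} = \sum_{u\in L_\ell,\, b(u)\ge 1} \ \sum_{v\in\child(u)} g_v = \sum_{u\in L_\ell,\, b(u)\ge 1} \beta_u(\rho)\, g_u .
\]
Here I use the proposition directly, since it applies to internal nodes. Two facts then close the induction:
\begin{enumerate}[label=(\roman*)]
\item $\beta_u(\rho) = b(u)(1-\rho)+\rho$ is nondecreasing in $b(u)$, because $1-\rho\ge 0$. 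So $\beta_u(\rho)\le \Gamma_B(\rho)$ whenever $b(u)\le B$.
\item Every factor $g_u$ is nonnegative. By the Exact path representation it is a product of the terms $\alpha_u(\rho) = (1-\rho)+\rho/b(u) > 0$.
\end{enumerate}
Together these yield
\[
M_{\ell+1} \le \Gamma_B(\rho)\sum_{u\in L_\ell} g_u = \Gamma_B(\rho) M_\ell \le \Gamma_B(\rho)^{\ell+1}.
\]
Dropping the leaf terms only decreases the sum, and this is where nonnegativity is used.

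I do not expect a real obstacle; this is a routine corollary. The main care points are the two facts above: the leaves at depth $\ell$ must be handled, and that requires the positivity of $g$; and the monotonicity of $\beta$ in $b$ relies on $\rho\le 1$. In Variant II the tree is a DFS spanning tree and the root is the dummy node $m$. The same argument applies verbatim, because only the rooted-tree structure and the propagation law are used. One could also note that the bound is attained, for instance by a complete $B$-ary tree, but that is not needed for the statement.
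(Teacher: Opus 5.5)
Your proof is correct and follows the same route as the paper: induction on depth, applying the local mass identity \eqref{eq:beta} at each level together with $\beta_u(\rho)\le\Gamma_B(\rho)$. You also make explicit what the paper's one-line induction leaves implicit: grouping $L_{\ell+1}$ by parent, the monotonicity of $\beta_u$ in $b(u)$ (which uses $\rho\le 1$), and the nonnegativity of $g$ needed to discard the leaves at depth $\ell$.
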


\begin{proof}
At depth $0$, the only node is the root and the claim is true since $g_r=1$. Moving from depth $\ell$ to $\ell+1$, the total factor mass generated by each node is multiplied by at most $\Gamma_B(\rho)$ by \eqref{eq:beta}. Induction gives the result.
\end{proof}

\paragraph{Interpretation.}
The local factor $g_v$ along any \emph{single path} cannot exceed the ancestor factor $g_u$, yet the \emph{aggregate mass} across descendants can expand when branching is present and $\rho<1$. This is the precise sense in which the framework amplifies topology while attenuating along each individual path.

\subsection{Signal preservation and normalization}

\begin{proposition}[Sign preservation]
Suppose post-processing \eqref{eq:postprocess} is not applied. If $s_v^{(A)} \neq 0$, then
\[
\mathrm{sign}(w_v) = \mathrm{sign}(s_v^{(A)}).
\]
\end{proposition}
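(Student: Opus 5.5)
The plan is to trace the sign through the three multiplicative stages that produce $w_v$ from $s_v^{(A)}$: the topological factor $g_v$, the pre-normalization product $x_v = s_v^{(A)} g_v$, and the leverage normalization \eqref{eq:normalized-weights}. Each stage multiplies by a quantity we will show is strictly positive. Since post-processing \eqref{eq:postprocess} is excluded by hypothesis, no clipping or thresholding can reset a weight to zero, so the sign is preserved end to end.

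\textbf{Step 1: $g_v>0$.} Use the exact path representation (Proposition on the path formula):
\[
g_v=\prod_{u\in\treepath(r,v)}\alpha_u(\rho).
\]
For every internal node $u$ on the path we have $b(u)\ge 1$ and $\rho\in[0,1]$. Hence $\alpha_u(\rho)=(1-\rho)+\rho/b(u)$ is a convex combination of $1$ and $1/b(u)$. Both are strictly positive, so $\alpha_u(\rho)\ge \min\{1,1/b(u)\}=1/b(u)>0$. Every node on $\treepath(r,v)$ has a child on the path, so the only nodes entering the product are internal and $\alpha_u$ is always well defined. A finite product of positive numbers is positive, and for the root the product is empty, giving $g_r=1>0$. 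Alternatively, one can argue by induction on depth using the recursion \eqref{eq:recursive-g}.

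\textbf{Step 2: $\mathrm{sign}(x_v)=\mathrm{sign}(s_v^{(A)})$.} This follows immediately from $x_v=s_v^{(A)}g_v$ and $g_v>0$. In particular $x_v\neq 0$ whenever $s_v^{(A)}\neq 0$.

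\textbf{Step 3: normalization.} We have $w_v=(L/\norm{x}_1)\,x_v$ with $L>0$. The normalization is assumed to be defined, i.e.\ $\norm{x}_1>0$. This holds automatically here, since $\norm{x}_1\ge |x_v|>0$ by Step 2. So the scaling constant is strictly positive and does not change the sign.

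The only point requiring care is Step 1: we must make sure $\alpha_u$ never vanishes or becomes negative. That is guaranteed by $\rho\in[0,1]$ and $b(u)\ge1$ for every ancestor on the path. The argument applies verbatim to both variants, because Variant II uses the same propagation law on its DFS spanning tree. The dummy root $m$ has factor $g_m=1$ and receives no weight, but it still contributes a positive $\alpha_m$ to the path products of its descendants.
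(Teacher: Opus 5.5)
Your proof is correct and follows the same route as the paper. Positivity of $g_v$ together with $L/\norm{x}_1>0$ means $w_v$ inherits the sign of $x_v=s_v^{(A)}g_v$, and you additionally supply the details the paper leaves implicit: each $\alpha_u(\rho)\ge 1/b(u)>0$, and $\norm{x}_1\ge |x_v|>0$, so the normalization is automatically well defined.
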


\begin{proof}
Since $g_v>0$ for all $v$ and $L/\norm{x}_1>0$, the sign of $w_v$ equals the sign of $x_v=s_v^{(A)}g_v$.
\end{proof}

\begin{proposition}[Scale symmetry]
Suppose post-processing \eqref{eq:postprocess} is not applied. For any scalar $c>0$,
\[
w(cs) = w(s),
\]
and for any $c<0$,
\[
w(cs) = -w(s).
\]
\end{proposition}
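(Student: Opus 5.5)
The argument is a direct computation from the definitions \eqref{eq:path-product} and \eqref{eq:normalized-weights}. The key observation is that the topological factors $g_v$ do not depend on the signal, provided the rooted tree is fixed. They depend only on $\rho$ and on the branching numbers $b(u)$.

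The first step is therefore to check that replacing $s$ by $cs$ leaves the tree unchanged. The MST is built from the return history alone, so it is unaffected. For the active set \eqref{eq:active-set}, the filter $|s_i|>\tau$ is not scale-invariant, so I will state the result for a fixed active set and rooted tree, as in the earlier propositions (or assume $|c|$ is such that the active set is unchanged). For the root, hub mode and fixed-index mode depend only on the returns or on the indexing. Max-magnitude mode selects $\arg\max |s_i|$, and $|cs_i| = |c|\,|s_i|$ has the same maximizer for any $c\neq 0$ (with ties broken identically). Hence, with $\mathcal{A}$ held fixed, the rooted tree is unchanged, and so is every $g_v$.

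With the tree fixed, the pre-normalization vector becomes $x(cs)_v = c\,s_v^{(A)} g_v = c\,x(s)_v$. Its norm is $\norm{x(cs)}_1 = |c|\,\norm{x(s)}_1$, which is positive whenever $\norm{x(s)}_1>0$ and $c\neq 0$. Substituting into \eqref{eq:normalized-weights} gives
\[
w(cs)_v = L\frac{c\,x_v}{|c|\,\norm{x}_1} = \mathrm{sign}(c)\, w(s)_v .
\]
This gives both claims at once. Inactive assets receive $0 = \pm 0$, so they satisfy both identities trivially.

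The only real obstacle is the signal-dependence of the tree construction, namely the activity threshold $\tau$ and the max-magnitude root. I will handle the threshold by conditioning on a fixed active set, which is consistent with the standing convention "once the rooted tree is fixed". I will handle the root rule by the invariance of $\arg\max|s_i|$ under nonzero scaling. The exclusion of post-processing is essential, because the constants $c$ (the clipping cap) and $\eta$ in \eqref{eq:postprocess} are not scale-equivariant. The argument above is unaffected by this, since it never applies \eqref{eq:postprocess}.
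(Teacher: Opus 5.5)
Your proof is correct and is the paper's argument: $x(cs)=c\,x(s)$ and $\norm{x(cs)}_1=|c|\,\norm{x(s)}_1$, so \eqref{eq:normalized-weights} gives $w(cs)=\mathrm{sign}(c)\,w(s)$. Your extra handling of the activity threshold $\tau$ and of the max-magnitude root rule makes explicit the fixed-active-set, fixed-tree assumption that the paper's one-line proof leaves implicit; without fixing $\mathcal{A}$ the statement can fail.
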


\begin{proof}
Because $x(cs)=c x(s)$ and the final output divides by $\norm{x(cs)}_1 = |c|\norm{x(s)}_1$, positive rescalings cancel and negative rescalings flip sign.
\end{proof}

\subsection{Sector-anchored structure in Semi-Supervised TRP}

We now formalize the one strong structural guarantee provided by the sector-anchored variant.

\begin{proposition}[Depth-one sector ETFs]
Assume the real-asset MST is connected and $\mathcal{X}\neq \varnothing$. Construct the augmented graph by adding the dummy root $m$ and edges $(m,x)$ for every $x\in\mathcal{X}$. In the DFS spanning tree rooted at $m$ produced by the implementation, every sector ETF $x\in\mathcal{X}$ is a depth-one child of $m$.
\end{proposition}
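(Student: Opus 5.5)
The proof hinges on what ``the DFS produced by the implementation'' means. The key point is \emph{when} a node receives its parent. I would make this precise first.

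The plan is to formalize the traversal as a stack-based search in which a node's parent is fixed once, at the moment the node is first discovered. Discovery happens while expanding some already-visited node, and the node is marked visited at that time and never reassigned. I would state this as the operating assumption about the implementation. It is exactly what matters here, because under a textbook recursive DFS the claim can fail. In that version, starting at $m$, the search descends into the first sector ETF $x_1$. It then walks the whole connected real-asset MST, so a second ETF $x_2$ would typically be reached through real edges at depth greater than one, before the search ever returns to $m$. Pinning down this convention is the main obstacle, and I would make it explicit in the statement of the proof rather than leave it implicit.

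With the convention fixed, the argument is short. The root $m$ is expanded first. At that moment the visited set is $\{m\}$, and by construction the neighbors of $m$ in the augmented graph are exactly the sector ETFs in $\mathcal{X}$. Hence every $x\in\mathcal{X}$ is unvisited when $m$ is expanded. Each such $x$ is therefore discovered from $m$ and assigned $\parent(x)=m$. Since parents are never reassigned, each $x$ stays a child of $m$ in the final rooted tree and so has depth one.

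It remains to check that the output really is a spanning tree of the augmented graph, so that ``depth'' is well defined. The real-asset MST is connected by assumption, and $\mathcal{X}\neq\varnothing$ links $m$ to it, so the augmented graph is connected. The search therefore reaches every vertex. Each non-root vertex receives exactly one parent, which gives $n_A$ parent edges on $n_A+1$ vertices with no cycles, since parents always precede their children in discovery order. This yields a spanning tree rooted at $m$. I would also note in passing that $m$ has no other children, because its only neighbors are the elements of $\mathcal{X}$, so the second layer consists exactly of $\mathcal{X}$.
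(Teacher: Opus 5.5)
Your argument is correct, and it is more substantive than the paper's. The paper's proof only observes that in any spanning tree rooted at $m$ that \emph{contains} every edge $(m,x)$, $x\in\mathcal{X}$, each such $x$ has depth one. It never shows that the tree produced by the implementation contains those edges, and that is the entire content of the claim.

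You derive that premise from a concrete traversal rule. Parents are fixed at discovery, and all of $m$'s neighbors are discovered during $m$'s single expansion, while the visited set is still $\{m\}$. You also correctly flag that the statement is false for a textbook recursive DFS. The same failure occurs for an iterative DFS that assigns parents at pop time.

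Your counterexample can be sharpened from ``typically'' to ``always.'' Once the search enters the first ETF explored from $m$, it exhausts the whole connected real-asset tree before returning to $m$. Hence, when $|\mathcal{X}|\ge 2$, every other ETF becomes a descendant of that first ETF, at depth at least two.

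So your route buys an honest statement of what the implementation must do: in effect, a mark-on-push search, which treats the first layer as BFS would. The paper's route buys only brevity, at the cost of assuming what is to be proved. Your closing spanning-tree check duplicates the paper's separate connectivity proposition, but it is harmless.
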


\begin{proof}
In any spanning tree rooted at $m$ that includes all edges $(m,x)$ for $x\in\mathcal{X}$ as tree edges, every sector ETF $x\in\mathcal{X}$ is a depth-one child of $m$.
\end{proof}

\begin{proposition}[Connectivity of the sector-anchored construction]
Assume the real-asset MST is connected and $\mathcal{X}\neq\varnothing$. Then the augmented graph obtained by adding the root $m$ and the edges $(m,x)$ for $x\in\mathcal{X}$ is connected, and the DFS parent map defines a spanning tree over all vertices.
\end{proposition}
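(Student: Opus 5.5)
The plan has two parts: first show the augmented graph is connected, then show the DFS parent map is a spanning tree. Write $T$ for the real-asset MST on vertex set $V$, with $|V| = n_A$, and let $G' = (V \cup \{m\}, E_T \cup \{(m,x) : x \in \mathcal{X}\})$ be the augmented graph.

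\emph{Connectivity.} Take arbitrary vertices $a, b$ of $G'$.
\begin{itemize}
\item If both lie in $V$, the path between them in $T$ is also a path in $G'$, since $T$ is connected by hypothesis and $E_T \subseteq E(G')$.
\item If one of them is $m$, say $a = m$, fix any $x_0 \in \mathcal{X}$; this is possible because $\mathcal{X} \neq \varnothing$. The edge $(m, x_0)$ followed by the $T$-path from $x_0$ to $b$ joins $m$ to $b$.
\end{itemize}
Hence $G'$ is connected.

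\emph{DFS parent map.} Run DFS from $m$ and let $\pi(v)$ be the vertex from which $v$ was first discovered, for each discovered $v \neq m$. I would establish three facts.
\begin{enumerate}
\item \emph{Every vertex is discovered.} DFS from $m$ visits exactly the connected component of $m$. The cleanest argument is by contradiction: suppose some vertex is undiscovered, and take the first undiscovered vertex along a path from $m$ to it. Its predecessor on that path was discovered, and DFS examines all neighbours of every discovered vertex before terminating, so this vertex would have been discovered, a contradiction.
\item \emph{Each parent edge is an edge of $G'$.} This holds because $\pi(v)$ is a neighbour of $v$ by construction.
\item \emph{The parent edges form a tree.} Every vertex is assigned a discovery time, and $\pi(v)$ is discovered strictly before $v$. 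So the sequence $v, \pi(v), \pi(\pi(v)), \dots$ has strictly decreasing discovery times. It therefore cannot cycle and must terminate at the unique parentless vertex $m$.
\end{enumerate}
The edge set $\{(v, \pi(v)) : v \neq m\}$ thus has $(n_A + 1) - 1 = n_A$ edges and connects every vertex to $m$. A connected graph on $n_A + 1$ vertices with $n_A$ edges is a tree, so it is a spanning tree of $G'$.

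The only step needing care is the acyclicity argument via monotone discovery times, since the augmented graph itself contains cycles whenever $|\mathcal{X}| > 1$. Everything else is routine.
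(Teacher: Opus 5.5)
Your proof is correct and follows the paper's route: the augmented graph is connected because the MST is connected and $m$ attaches to some $x_0\in\mathcal{X}$, and a DFS on a connected graph then yields a spanning tree. The paper simply cites the DFS fact, whereas you prove it through monotone discovery times and edge counting; that detail is sound, since the same monotonicity also shows the $n_A$ parent edges are distinct.
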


\begin{proof}
The real-asset MST is connected by definition. Since $m$ is connected to every node in $\mathcal{X}$, and $\mathcal{X}$ is nonempty, $m$ is connected to the real-asset tree. Therefore the augmented graph is connected. A DFS on a connected graph always generates a spanning tree.
\end{proof}

\begin{remark}[Order dependence]
When $|\mathcal{X}|>1$, the augmented graph contains cycles. The sector ETFs remain depth-one children of the root, but the parent assignments of \emph{non-sector} assets can depend on DFS traversal order. This is a genuine implementation feature of the current code and should be acknowledged explicitly.
\end{remark}

\begin{commentg}

\section{Comparison with classical HRP}

It is essential to be precise about the relationship between TRP and classical HRP.

\subsection{What HRP does}

Classical HRP \cite{lopezdeprado2016} proceeds by:
\begin{enumerate}[label=(\roman*)]
    \item building a hierarchical clustering (dendrogram) from a distance matrix;
    \item quasi-diagonalizing the covariance matrix according to that hierarchy;
    \item recursively bisecting clusters and allocating capital between sibling clusters using inverse cluster variance.
\end{enumerate}

The core object is therefore a \emph{binary hierarchical clustering}, and the core recursion is a \emph{variance-based conservative split}. Capital is conserved at each split, and the method is parity-inspired because the recursion targets balanced risk across the hierarchy.

\subsection{What TRP does}

By contrast, TRP:
\begin{enumerate}[label=(\roman*)]
    \item builds an MST or DFS-derived spanning tree from correlation distance;
    \item chooses or imposes a root;
    \item propagates a \emph{signal-carrying topological factor} using \eqref{eq:alpha};
    \item multiplies the raw signal by that factor and normalizes.
\end{enumerate}

The core object is a \emph{rooted sparse topology}, and the core recursion acts on \emph{signed signal mass}, not on cluster variance.

\subsection{Why TRP is not HRP for \texorpdfstring{$\rho<1$}{rho<1}}

By Corollary 1, the local deviation from conservative splitting is
\[
(b(u)-1)(1-\rho).
\]
Therefore:
\begin{itemize}
    \item if $\rho<1$ and some internal node branches, the method is not locally conservative;
    \item if $\rho=1$, the method becomes conservative and equal-split, but is still not classical HRP because it does not use cluster variances.
\end{itemize}
\end{commentg}
\begin{table}[t]
\centering
\caption{Conceptual comparison between TRP and classical HRP.}
\label{tab:comparison}
\renewcommand{\arraystretch}{1.2}
\begin{tabular}{>{\raggedright\arraybackslash}p{0.22\linewidth}
                >{\raggedright\arraybackslash}p{0.24\linewidth}
                >{\raggedright\arraybackslash}p{0.24\linewidth}
                >{\raggedright\arraybackslash}p{0.24\linewidth}}
\toprule
Feature & TRP-MST & TRP-SPY/XL (Semi-Supervised TRP) & Classical HRP \\
\midrule
Primary structure & Rooted MST & Rooted spanning tree with market/sector prior & Binary dendrogram \\
Support for L/S portfolios & Yes & Yes & No \\
Root & Heuristic or fixed & Dummy market root (SPY) & No exogenous root \\
Second layer & Emergent & Sector ETFs (\texttt{XL*}) if present & Emergent from clustering \\
Propagation object & Signed signals & Signed signals & Cluster variance / risk budget \\
Local conservation & No, unless $\rho=1$ & No, unless $\rho=1$ & Yes \\
Uses original signal sign & Yes & Yes & Not central to vanilla HRP \\
Requires covariance inversion & No & No & No \\
Requires cluster variance recursion & No & No & Yes \\
Economic priors can be imposed & Limited & Strong & Limited \\
\bottomrule
\end{tabular}
\end{table}

\section{Practical implications and benefits}

\subsection{Benefits relative to HRP}

The framework offers several practical advantages.

\paragraph{1. Explicit control over hierarchy.}
The sector-anchored Semi-Supervised TRP variant allows the user to encode a market/sector/stock ontology directly. This is difficult in vanilla HRP, where the hierarchy is entirely emergent from clustering.

\paragraph{2. Support for L/S portfolios}
TRP multiplies the original signal by a positive topological factor, so the sign of the alpha is preserved prior to any clipping or thresholding. This makes the method natural for long-short overlays.
From a practitioner perspective, this matters because long--short books are typically built from signed signals and then constrained by gross leverage, net exposure, and position caps. TRP preserves the long/short intent of the signal by construction, while using the dependence topology to tilt position sizes away from crowded clusters and toward more idiosyncratic names---often reducing unintended factor bets without running a fragile covariance optimizer. In live workflows, the final weights still drop out in one pass as a vector that can be clipped, thresholded, sector-neutralized, and rescaled to the desk's risk limits.

\paragraph{3. Economic Interpretability}
TRP reflects underlying factor hierarchies such as one might see in Barra or Axioma models. While HRP only allows for binary trees, TRP captures the natural factor hierarchy that exists in markets, which involves the market branching into sectors, industries and sub-industries. One can impose or recover through the MST, different factor clusters that might exist within sectors or industries. One can then impose market, sector or factor neutrality easily.

\paragraph{4. Sparse and interpretable structure.}
The allocator depends on a tree or spanning tree rather than on the full covariance matrix. This creates an intuitive map from economic or statistical structure to final position sizes.

\paragraph{5. Smooth interpolation parameter.}
The parameter $\rho$ continuously interpolates between normalized raw-signal allocation ($\rho=0$) and equal-split rooted-tree allocation ($\rho=1$). This is a flexible control lever absent from standard HRP.

\paragraph{6. No covariance inversion and no recursive cluster variance estimation.}
Like HRP, TRP avoids covariance inversion. Unlike HRP, it does not require cluster variance calculations or quasi-diagonalization. The pipeline is comparatively simple: correlation matrix, distance matrix, MST, rooted propagation, normalization.

\paragraph{7. Natural compatibility with post-processing constraints.}
Capping, pruning of very small weights, gross leverage targeting, and sector-specific overlays are easy to append after the rooted propagation step.

\paragraph{8. Resilience to Macro Shocks}
Correlations that were low in-sample might drastically spike during macroeconomic shocks. Here the economically motivated tree saves us by ensuring that we have limited allocation to a sector, industry or factor.  



\subsection{Modeling choices and caveats}

These benefits come with modeling choices that should be understood.

\paragraph{1. Root dependence matters.}
In the generic rooted-MST variant, the root can be chosen by degree, maximal signal, or index. Different roots can change the entire allocation.

\paragraph{2. MST instability matters.}
Small changes in correlations can change the MST combinatorially. This sensitivity is shared by many graph-based methods.

\paragraph{3. A single tree discards edges.}
The MST keeps only $n_A-1$ edges, which creates interpretability and sparsity but necessarily discards information present in the full graph.

\paragraph{4. Order dependence in the sector-anchored construction.}
When multiple sector ETFs are connected to the dummy root, the augmented graph contains cycles. The resulting spanning tree for non-sector nodes can depend on traversal order.

\paragraph{5. The fallback dummy-root MST branch is pragmatic, not canonical.}
Treating a zero-return dummy root as a node in the augmented correlation matrix is a pragmatic fallback rather than a theoretically ideal construction.

\subsection{Useful structural facts}

The framework is simple enough that several properties are immediate.

\paragraph{1. Sign preservation}

As long as clipping does not zero out the name, the final weight has the same sign as the original signal:
\[
\mathrm{sign}(w_i)=\mathrm{sign}(s_i).
\]

\paragraph{2. Scale invariance of the signal}

If all signals are multiplied by the same positive constant, the final normalized weights are unchanged. If all signals are multiplied by a negative constant, all positions flip sign.

\paragraph{3. Topology only attenuates along a path}

Since
\[
\alpha_u(\rho) = (1-\rho)+\frac{\rho}{b(u)} \le 1,
\]
every additional branching level weakens the surviving conviction of downstream names. The deeper a node sits in the tree, the more structure it must pass through before becoming a final position.

\paragraph{4. The framework collapses to raw-signal allocation when topology is ignored}

At $\rho=0$, the framework simply returns a normalized signal portfolio. This provides a useful baseline and allows one to measure whether the topology is actually helping.

\begin{commentg}

\section{Discussion and extensions}

The present framework suggests several extensions.

\paragraph{1. Conservative topological allocation.}
Setting $\rho=1$ gives a conservative equal-split rooted-tree allocator. One can generalize this by replacing equal split with any normalized child score:
\[
\alpha_{u\to v}
=
\frac{q_{u\to v}}{\sum_{z\in \child(u)} q_{u\to z}},
\]
where $q_{u\to v}$ might depend on subtree risk, inverse volatility, or signal mass. This creates a broader family of conservative topological allocators.

\paragraph{2. Risk-budgeted topological propagation.}
One can replace equal split by a normalized child score $q_{u\to v}$ tied to a risk quantity (e.g., inverse subtree variance) or an exposure-control objective. This yields a family of conservative, risk-budgeted topological allocators within the same geometric pipeline.

\paragraph{3. Multi-root and multi-scale constructions.}
The sector-anchored variant already hints at a multi-root or multi-scale perspective: market root, sector layer, then stocks. A natural extension is to build explicit multiscale or multiperiod topological allocators in which the tree structure is updated more slowly than the signal layer.

\paragraph{4. More stable graph backbones.}
The MST is only one possible graph backbone. Alternatives include Planar Maximally Filtered Graphs, thresholded correlation networks, Steiner-tree approximations, or ensemble trees.
\end{commentg}
\section{Conclusion}

We began with the intuition that passive and factor flows in markets create a tree-like topological structure, captured by minimal spanning trees (MSTs). We have formalized Topological Risk Parity built around rooted sparse dependence structures captured by MSTs and signal-preserving propagation. The framework captures the essential logic of two variants: a generic rooted MST allocator and a market/sector-anchored Semi-Supervised TRP variant.

The main theoretical conclusion is simple and important. The propagation rule
\[
\alpha_u(\rho) = (1-\rho) + \frac{\rho}{b(u)}
\]
interpolates between a non-conservative regime ($\rho<1$), in which branching can expand aggregate descendant factor mass, and the conservative equal-split regime ($\rho=1$). This continuous control is the mechanism by which topology shapes exposure.

Overall, TRP offers explicit economic hierarchy, sparse interpretability, preservation of signal sign, and simple integration of practical constraints. The main sensitivities are root choice, MST sensitivity to correlations, and (in the sector-anchored construction) order dependence for non-sector descendants.

\appendix

\section{Alternative statement of the generic allocator}

For clarity, the generic TRP-MST variant can be summarized in a compact formula. Let $T_r$ be the rooted MST on the active assets with root $r$, and define
\[
g_i(T_r,\rho)
=
\prod_{u\in \treepath(r,i)}
\left(
1-\rho+\frac{\rho}{b(u)}
\right).
\]
Then the final portfolio is simply
\begin{equation}
w_i
=
L
\frac{s_i g_i(T_r,\rho)}
{\sum_{j\in\mathcal{A}} |s_j|\, g_j(T_r,\rho)}
\cdot \1_{\{i\in \mathcal{A}\}},
\label{eq:compact-final}
\end{equation}
followed, if desired, by clipping and thresholding.

\section{Recommended terminology}

To avoid conceptual confusion, the following naming convention is recommended:
\begin{itemize}
    \item \textbf{TRP-MST}: the generic rooted MST allocator;
    \item \textbf{TRP-SPY/XL} or \textbf{Semi-Supervised TRP}: the sector-anchored variant with dummy market root;
    \item \textbf{topological allocation}: the generic family name;
    \item \textbf{topological risk parity}: use only for conservative, risk-based variants, not for the generic TRP-MST construction when $\rho<1$.
\end{itemize}

\section{Additional theoretical results}

This appendix collects supplementary results that are not needed to implement TRP but may be useful for analysis.

\subsection{Bounds on topological factors}

\begin{proposition}[Bounds on topological factors]
Assume $\rho \in [0,1]$, and let
\[
B := \max_{u: b(u)\ge 1} b(u)
\]
be the maximal branching number in the rooted tree. Then for every node $v$ of depth $d_v := \depth(v)$,
\begin{equation}
\left(1-\rho + \frac{\rho}{B}\right)^{d_v}
\le
g_v
\le
1.
\label{eq:g-bounds}
\end{equation}
In particular,
\[
(1-\rho)^{d_v} \le g_v \le 1.
\]
\end{proposition}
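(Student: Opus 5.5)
We will argue directly from the exact path representation established earlier,
\[
g_v=\prod_{u\in\treepath(r,v)}\alpha_u(\rho),\qquad \alpha_u(\rho)=(1-\rho)+\frac{\rho}{b(u)},
\]
and note that $\treepath(r,v)$ contains exactly $d_v$ nodes: the ancestors $r=u_0,u_1,\dots,u_{d_v-1}$ of $v$, excluding $v$ itself. Each of these ancestors has $v$ (or an ancestor of $v$) as a child. Hence $b(u)\ge 1$ for every $u$ on the path, so $\alpha_u(\rho)$ is well defined, and also $b(u)\le B$. For the root, $d_r=0$, the product is empty and $g_r=1$, so \eqref{eq:g-bounds} holds trivially. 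This case is absorbed by the general argument with the convention that an empty product equals $1$.

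The key step is a two-sided bound on each factor. Since $\rho\ge 0$ and $\rho/b$ is nonincreasing in $b\ge 1$, we get, for $1\le b(u)\le B$,
\[
1-\rho+\frac{\rho}{B}\;\le\;\alpha_u(\rho)\;\le\;1-\rho+\rho=1 .
\]
The lower factor is also strictly positive: if $\rho<1$ then $1-\rho>0$, and if $\rho=1$ it equals $1/B>0$. All factors therefore lie in $(0,1]$. We multiply these $d_v$ inequalities between nonnegative quantities, using the fact that products of nonnegative numbers are monotone in each factor. This gives $\bigl(1-\rho+\rho/B\bigr)^{d_v}\le g_v\le 1$.

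The ``in particular'' clause follows because $1-\rho\le 1-\rho+\rho/B$, both sides are nonnegative, and $t\mapsto t^{d_v}$ is monotone on $[0,\infty)$.

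No step is a genuine obstacle. The only points requiring care are bookkeeping ones. First, the path must have exactly $d_v$ factors, which relies on the convention in \eqref{eq:path-product} that $r$ is included and $v$ excluded. Second, every node on the path must have $b(u)\ge1$, so that $B$ bounds it and no division by zero occurs. Third, positivity of the factors is what justifies multiplying the inequalities.
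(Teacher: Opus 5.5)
Your proposal is correct and follows the same route as the paper. Both bound each path factor $\alpha_u(\rho)$ between $1-\rho+\rho/B$ and $1$ using $1\le b(u)\le B$, then multiply over the $d_v$ ancestors; your extra bookkeeping (exactly $d_v$ factors, $b(u)\ge 1$ on the path, nonnegativity when multiplying inequalities) makes explicit what the paper's one-line proof leaves implicit.
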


\begin{proof}
For every internal node $u$,
\[
1-\rho + \frac{\rho}{B}
\le
1-\rho + \frac{\rho}{b(u)}
\le 1,
\]
since $b(u)\le B$ and $\rho\in[0,1]$. Multiplying along a path of length $d_v$ yields \eqref{eq:g-bounds}.
\end{proof}

\subsection{Independence from the subtree-mass exponent in Variant I}

\begin{proposition}[Independence from $p$ in Variant I]
Fix the active set $\mathcal{A}$, returns $R^{(A)}$, root $r$, and parameter $\rho$. In Variant I, the output is independent of the exponent $p$ used to compute subtree masses.
\end{proposition}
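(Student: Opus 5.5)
The plan is to trace the dependency chain of the Variant I pipeline and check that $p$ enters none of the quantities the final output depends on. By the compact formula \eqref{eq:compact-final}, the output is
\[
w_i = L\,\frac{s_i\, g_i(T_r,\rho)}{\sum_{j\in\mathcal{A}} |s_j|\, g_j(T_r,\rho)}\,\1_{\{i\in\mathcal{A}\}}.
\]
It is therefore a function only of four objects: the signal vector $s^{(A)}$, the active set $\mathcal{A}$, the leverage $L$, and the topological factors $g_v$. The factors $g_v$ are in turn given by the path representation (the Exact path representation proposition). So it suffices to show that each of these objects is determined without reference to $p$.

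I will proceed through the steps of Variant I in order.

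\emph{Active set and signal.} The set $\mathcal{A}$ is defined in \eqref{eq:active-set} from the returns, $\varepsilon$, $\tau$ and $s$. The restricted signal $s^{(A)}$ is just the restriction of $s$ to $\mathcal{A}$. Both are fixed by hypothesis and involve no $p$.

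\emph{Tree and factors.} The correlation matrix $C$, the distance matrix $D$ from \eqref{eq:mantegna-distance}, and hence the MST are functions of $R^{(A)}$ alone. Fixing the root $r$ then gives a unique orientation, and thus unique parent and child maps and branching numbers $b(u)$. The coefficients $\alpha_u(\rho)$ in \eqref{eq:alpha} depend only on $\rho$ and $b(u)$. By the path representation, each $g_v$ is a product of such coefficients along $\treepath(r,v)$, so it depends only on the rooted tree and on $\rho$.

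\emph{Normalization.} The pre-normalization weights are $x_v=s_v^{(A)}g_v$, and $\norm{x}_1$ is built from these. So \eqref{eq:normalized-weights} yields a $w$ that is a function of $(s^{(A)},R^{(A)},r,\rho,L)$ only.

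The one genuinely substantive point is ensuring that $p$ does not enter the pipeline implicitly. The natural worry is the root choice: in max-magnitude mode the root depends on $|s_i|$, and one might imagine a variant that ranks by $S_u$. This is handled by the hypothesis that $r$ is fixed. If one also wants the result for the data-dependent root rules, I would observe that hub mode uses the MST degree, max-magnitude mode uses $|s_i|$, and fixed-index mode uses an index, none of which involves $S_u$.

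The remaining observation is that the quantities $S_u=\sum_{i\in\mathrm{subtree}(u)}|s_i|^p$ appear nowhere in \eqref{eq:alpha}--\eqref{eq:normalized-weights}; they are computed but discarded. Combining these facts shows that $w$ is constant as a function of $p\ge 1$. If ties in the MST are a concern, I would fix a deterministic tie-breaking rule, which is likewise independent of $p$.
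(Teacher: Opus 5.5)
Your proposal is correct and follows the same argument as the paper: the coefficients $\alpha_u(\rho)$ depend only on $\rho$ and the branching numbers of the fixed rooted tree, so the subtree masses $S_u$ never enter the weights and the output cannot depend on $p$. Your version simply spells out the dependency chain (active set, MST, root choice, normalization, tie-breaking) more explicitly than the paper's short proof does.
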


\begin{proof}
One may compute
\[
S_u = \sum_{i \in \mathrm{subtree}(u)} |s_i|^p,
\]
but in Variant I the propagation coefficients are determined solely by the branching numbers via \eqref{eq:alpha}. The subtree masses never enter the final weight computation. Therefore, once the tree itself is fixed, the output is independent of $p$.
\end{proof}

\subsection{Conditional Lipschitz stability}

The topological map is piecewise smooth: it is smooth conditional on a fixed active set, fixed correlation matrix ordering, and fixed rooted tree. Discontinuities arise only when the filter changes, when correlations cross enough to alter the MST, or when the root rule changes.

\begin{theorem}[Conditional $L^1$ stability]
Fix a rooted tree and its topological factors $g_v$. Let $D_g := \diag(g_1,\dots,g_{n_A})$, and define
\[
F(s) := L \frac{D_g s}{\norm{D_g s}_1},
\]
whenever $\norm{D_g s}_1 > 0$. If $\norm{D_g s}_1 \ge \gamma$ and $\norm{D_g s'}_1 \ge \gamma$ for some $\gamma>0$, then
\begin{equation}
\norm{F(s)-F(s')}_1
\le
\frac{2L}{\gamma}\norm{D_g(s-s')}_1
\le
\frac{2L}{\gamma}\norm{s-s'}_1.
\label{eq:lipschitz}
\end{equation}
\end{theorem}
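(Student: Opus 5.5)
The plan is to prove the standard stability bound for the normalization map $y\mapsto y/\norm{y}_1$ on the region $\norm{y}_1\ge\gamma$, and then compose it with the linear map $D_g$. Write $y:=D_g s$ and $y':=D_g s'$, so that $F(s)=L\,y/\norm{y}_1$ and $F(s')=L\,y'/\norm{y'}_1$.

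The first step is to insert the intermediate vector $y'/\norm{y}_1$ and apply the triangle inequality:
\[
\norm{\frac{y}{\norm{y}_1}-\frac{y'}{\norm{y'}_1}}_1
\le
\frac{\norm{y-y'}_1}{\norm{y}_1}
+\norm{y'}_1\,\abs{\frac{1}{\norm{y}_1}-\frac{1}{\norm{y'}_1}}.
\]
The second term equals
\[
\frac{\abs{\norm{y'}_1-\norm{y}_1}}{\norm{y}_1}.
\]
By the reverse triangle inequality it is at most $\norm{y-y'}_1/\norm{y}_1$. Both terms are therefore bounded by $\norm{y-y'}_1/\norm{y}_1 \le \norm{y-y'}_1/\gamma$, which uses only $\norm{y}_1\ge\gamma$. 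Multiplying by $L$ gives the first inequality in \eqref{eq:lipschitz}, with $y-y'=D_g(s-s')$ by linearity.

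The second inequality follows from the proposition on bounds on topological factors, which gives $0<g_v\le 1$ for every node. Hence
\[
\norm{D_g(s-s')}_1=\sum_v g_v\abs{s_v-s'_v}\le\norm{s-s'}_1.
\]

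There is no real obstacle here. The only point requiring care is choosing the right intermediate vector, so that the normalization denominator is bounded below by $\gamma$ and no bound on $\norm{y'}_1$ from above is needed. The factor $\norm{y'}_1$ cancels exactly in the second term. In fact only one of the two hypotheses, $\norm{D_g s}_1\ge\gamma$, is used; the symmetric assumption is harmless.
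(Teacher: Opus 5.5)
Your proof is correct and follows the paper's argument essentially verbatim. It uses the same decomposition through the intermediate vector $y'/\norm{y}_1$, the reverse triangle inequality for the second term, and $g_v\le 1$ from the factor-bounds proposition for the second inequality. Your observation that only $\norm{D_g s}_1\ge\gamma$ is actually needed is also accurate: the paper cites both lower bounds, but the exact cancellation of $\norm{y'}_1$ means only the first is used.
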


\begin{proof}
Set $x=D_gs$ and $y=D_gs'$. Then
\[
\frac{x}{\norm{x}_1} - \frac{y}{\norm{y}_1}
=
\frac{x-y}{\norm{x}_1}
+
y\left(\frac{1}{\norm{x}_1}-\frac{1}{\norm{y}_1}\right).
\]
Taking $L^1$ norms and using $\norm{x}_1,\norm{y}_1\ge \gamma$,
\[
\norm{\frac{x}{\norm{x}_1} - \frac{y}{\norm{y}_1}}_1
\le
\frac{\norm{x-y}_1}{\gamma}
+
\frac{\big|\norm{y}_1-\norm{x}_1\big|}{\gamma}
\le
\frac{2\norm{x-y}_1}{\gamma}.
\]
Multiplying by $L$ proves the first inequality. The second follows because every $g_v\le 1$ by \eqref{eq:g-bounds}, hence $\norm{D_g z}_1 \le \norm{z}_1$ for all $z$.
\end{proof}

\begin{remark}
The theorem is intentionally \emph{conditional}. It does not claim global continuity across changes in the MST, root, or active set. Those combinatorial changes are real and should be handled empirically.
\end{remark}

\section*{Disclaimer} \textit{This document is provided for informational and educational purposes only and does not constitute investment advice, a recommendation, or an offer to buy or sell any securities, financial instruments, or investment strategies. The authors make no representations or warranties as to the accuracy, completeness, or suitability of the information contained herein. The views expressed are subject to change without notice and may not reflect the opinions of any affiliated institutions or entities.}
\end{document}